\newtheorem{theorem}{Theorem}
\newtheorem{definition}{Definition}
\DeclareMathOperator{\Max}{Max}
\algnewcommand\algorithmicforeach{\textbf{for each}}
\renewcommand\algorithmicdo{}
\begin{document}

\title{Adaptive Learning for Moving Target defence: Enhancing Cybersecurity Strategies}

\author{Mandar Datar\IEEEauthorrefmark{1}\thanks{$^{1}$ M. Datar is currently with 
CEA-Leti, Grenoble, France
         and was associated with Orange Research, Châtillon, France, during the time of this work. },  Yann Dujardin\IEEEauthorrefmark{2}
\\
\IEEEauthorrefmark{1}CEA-Leti, Grenoble, France
\IEEEauthorrefmark{2} Orange Research, Châtillon, France

}

\maketitle

\begin{abstract}
In this work, we model Moving Target Defence (MTD) as a partially observable stochastic game between an attacker and a defender. The attacker tries to compromise the system through probing actions, while the defender minimizes the risk by reimaging the system, balancing between performance cost and security level. We demonstrate that the optimal strategies for both players follow a threshold structure. Based on this insight, we propose a structure-aware policy gradient reinforcement learning algorithm that helps both players converge to the Nash equilibrium. This approach enhances the defender's ability to adapt and effectively counter evolving threats, improving the overall security of the system. Finally, we validate the proposed method through numerical simulations.
\end{abstract}

\section{Introduction}
Cybersecurity consists in preventing cyberattacks, i.e. attempts by hackers to, for example, damage, destroy a system or a server, lock systems to ask money for unlocking (ransomware), or to get access to sensitive data (financial, personal, confidential, ...). Among others. With the growing number of internet users and increasing network traffic, automated defence mechanisms have become more crucial than ever. Traditional cybersecurity relies on a detection and response approach, where systems identify early signs of threats and take action.  Despite many advances, computer systems and networks remain vulnerable to various cyberattacks over time.

\par A key challenge in achieving effective defence lies in information asymmetry, where attackers typically possess more knowledge about the defender than the defender has about the attacker. Over time, attackers can engage in continuous reconnaissance efforts to gather information, analyze the system, and study the target's defence policies.  This allows them to adjust and improve their attack strategies continuously.

To counter this information asymmetry and overcome traditional defence methods, moving target defence (MTD) has evolved as a promising solution, with increasing research and deployment interest in recent years [1-10]. This cybersecurity approach dynamically reconfigures the system, for instance, through virtual machine migration and IP shuffling to alter the attack surface. Essentially, it ``moves" cyber assets (or targets) within an ICT network, making it harder for attackers to exploit vulnerabilities \cite{lei2018moving}.

\par Although MTD has emerged as a promising strategy for enhancing cybersecurity, it also introduces new challenges. The system reconfiguration is inherently costly, particularly in terms of performance. Frequent reconfigurations can lead to a degradation in performance, resulting in increased delays and a reduction in service quality. In contrast, infrequent reconfigurations  can expose the system to greater risk. Therefore, it is essential to optimize the reconfiguration frequency in order to achieve a balance between system performance and security.

\par In recent years, researchers have studied MTD from various perspectives and have proposed a range of reconfiguration solutions \cite{sengupta2020survey}. The authors of \cite{dass2021reinforcement} proposed a method based on reinforcement learning (RL) to generate secure software configurations for general MTD. The authors of \cite{yoon2021desolater} used multi-agent Deep Reinforcement Learning (DRL) to implement proactive IP shuffling for MTD  against reconnaissance attacks in in-vehicle SDNs. Several other studies have explored RL based MTD method, including \cite{celdran2024rl,shang2023deep,lallouche2024deep}. Many studies have modeled MTD as a game between the defender and the attacker, and analyzed it through the lens of game theory. For instance, \cite{ eghtesad2020adversarial} introduced a method that models the problem as a two-player general-sum game, where an attacker and a defender compete to control a group of servers. To find the optimal MTD strategy, they utilized Deep Q-Learning (DQ-Learning) and the Double Oracle Algorithm. In \cite{sengupta2020multi}, the authors used Bayesian Stackelberg Markov Games (BSMGs) to model uncertainty about attacker types and MTD specifics, and proposed a multi-agent RL-based solution to learn optimal movement policies.
Authors in \cite{feng2017stackelberg} designed optimal timing of MTD based on the stackelberg game model.

In this work, we propose a novel game-theoretic approach to optimal reconfiguration or reimage under MTD to balance costs and security risks effectively. In particular, we consider the defender can reimage the network according to partial observations of the attacker's probings, and the attacker makes actions according to perfect observations of the system (he is aware of who controls the system at every time step). The framework is more realistic as it corresponds to the situation when talking about the attack/defence process, ii) it allows us to look at theoretical guarantees and bounds about the policies search, iii) these results can be observed numerically.
\par Our work is close to ~\cite{ eghtesad2020adversarial}, we model the MTD as a partially observable stochastic game between the attacker and the defender. However, our work departs from there in that they directly used multi-agent DRL to learn the Nash equilibrium policy; while we investigate the Nash equilibrium policy and show that the Nash equilibrium policy has a threshold structure over belief of state, a similar approach has been used in \cite{hammar2023learning}  but for automated intrusion response. Finally, relying on structural information, we propose a policy gradient reinforcement learning algorithm to compute the Nash equilibrium. Our approach method significantly reduces the complexity of Nash equilibrium computation.


The remainder of the paper is organized as follows. Section \ref{sysytem model} describes the system model. In Section \ref{Problem_formulation}, we analyze the Nash equilibrium within the framework of a partially observable stochastic game. Section \ref{sec_learning} introduces a policy gradient-based reinforcement learning algorithm to compute the equilibrium. Numerical results evaluating the performance of the proposed approach are presented in Section \ref{numerical}. Finally, we summarize our findings and discuss future research directions. 
\section{Model}\label{sysytem model} We consider a single system that typically represents a critical cyber system.
\subsection{Attacker}
We consider attacker attempts to gain control over a system through a series of probing actions. With each probe, he either can succeed in gaining control of the system or it increases the attacker's chances of gaining control in subsequent attempts. Such aspects have been generally observed in practicality as the attacker gathers more information at each stage, and his understanding of the target's infrastructure, security measures, and potential vulnerabilities increases. 
Attacker can probe system, probing system takes control of it with probability 
\begin{equation}
1-e^{-\alpha(\rho+1)}
\end{equation} where $\rho$ is the number of probes attempted. 
\subsection{Defender} In order to protect the system, we consider the defender adopts the practice of periodically reimaging it, which involves reconfiguring the system to its original state. This proactive measure hinders attackers from establishing a strong presence within the system. Any progress made by the attackers in infiltrating the system will be completely wiped out during the reimage process. However, reimagining a system can bring about various costs for the defender. For instance, changing configurations or network topologies too often may impact system performance and responsiveness, and it can also result in the dissatisfaction of customers due to delays. Thus the defender must strike the right balance between enhanced security measures and potential performance degradation. We model the system as a partially observable noncooperative stochastic game, where two players, Defender and Attacker, compete to gain control over a system or a security-sensitive resource. The one who tries to compromise the system is called an attacker, while the defender is the one who protects the system from being compromised. Formally, we define the game as consisting of the following elements
\subsection{Players}
\begin{itemize}
\item Defender (D)

\item Attacker (A)
\end{itemize}

\subsection{Action set}
Let $\mathcal{A}_D$ and $\mathcal{A}_A$ denote the action sets of the defender and the attacker, respectively. The defender can reimage system\\
$\mathcal{A}_D=\left\{0,1 \right\}$, where 0:=reimage, 1:=continue\\
The attacker can probe the system and seek to gain control over it.
$\mathcal{A}_A=\left\{0,1 \right\}, $ where 0:=probe, 1:= not probe 
\subsection{Rewards and Costs}
\begin{itemize}
    \item Defender's Cost: Reimaging the system incurs a cost \( C_D \).
    \item Attacker's Cost: Each probing attempt costs \( C_A \).
    \item Control Reward: A player gains 1 unit of reward per time step if they control the system.
\end{itemize}
\subsection{State Space}
We define the states of the system as follows.
$\mathcal{S}= \left\{0,1 \right\}$
\begin{itemize}
    \item[0] defender controls the system
    \item[1] attacker controls the system
\end{itemize}
\subsection{State Transition Probabilities}
Let $\mathcal{T}(j, i, d, a)$ represent the probability of transitioning from state $i$ to state $j$ when the defender takes action $d$ and the attacker takes action $a$. Before proceeding, it is important to note that the in our case system exhibits the memoryless property. This means that the probability of success at a future probe is independent of any previous failures, which can be verified by the following calculation.
\begin{dmath}
P(\text{Success at probe } \rho+k \mid \text{Failure up to } \rho) = P(\text{Success at probe } k)
\end{dmath}
\begin{itemize} \item If the defender controls the system (state 0) and decides not to reimage (action $d = 1$), while the attacker probes (action $a = 0$), the system stays in state 0 with probability $e^{-\alpha(\rho+1)}$. The remaining probability, $1 - e^{-\alpha(\rho+1)}$, represents the case where the attacker gains control and the system moves to state 1. \item If the defender decides to reimage the system, regardless of the current state, the system will always reset to state 0 (i.e., defender controls the system). \item If the defender does not reimage (action $d = 1$) and the attacker does not probe (action $a = 0$), the system stays in the same state with probability 1. This means if the defender controls the system (state 0), it stays in state 0, and if the attacker controls the system (state 1), it stays in state 1. \item Finally, if the defender decides not to reimage and the attacker controls the system, the system will remain in state 1 regardless of the attacker's action. \end{itemize}
The transition probabilities can be summarized as follows:
\begin{align} \mathcal{T}(0, 0, 1, 0) &= e^{-\alpha} \\
\mathcal{T}(1, 0, 1, 0) &= 1 - e^{-\alpha} \\
\mathcal{T}(0, \cdot, 0, \cdot) &= 1 \\
\mathcal{T}(0, 0, 1, 1) &= 1 \\ 
\mathcal{T}(1, 1, 1, .) &= 1
\end{align}\subsection{Observations}
The defender does not know whether system has been compromised by the the attacker or not. However, the defender can observe the each probe with probability $1-\nu$ hence with probability $\nu$ probe is undetected. The attacker comes to know if defender reimage system if it is compromised by the him, but cannot observe the reimaging of uncompromised system without probing it  
\subsection{Belief Space}
Since the defender does not know the exact state of the system, it can only detect the attacker's probing attempts with some probability. Based on the observed history, the defender forms belief about the underlying state of the game. Let \( b_k^D(s_k) \) denote the defender’s belief about the system state \( s_k \) at time \( k \), given the history \( h_k \) up to that time
\begin{equation}
    b_k^D = \mathbb{P}(s_k \mid h_k)
\end{equation}
\subsection{The observation distribution}
Let \( B(d) \) represent the observation distribution for the defender. It indicates the probability of observing event \( o \) given that the defender took action \( d \) and the system transitioned to state \( s \). Mathematically, this is expressed as:
\[
B_{so}(d) = \mathbb{P}\left(o_{k+1} = o \mid x_{k+1} = s, d_k = d\right)
\]
This means the probability that the observation \( o_{k+1} \) is equal to \( o \) at the next time step, given that the state at that time step is \( s \) and the defender took action \( d \) at time step \( k \).
\subsubsection{Belief Update with HMM Filter}
Let \( b^D_k = \left[b^D_k(0), b^D_k(1)\right] = \left[b^D_k(0), 1 - b^D_k(0)\right] \) represent the defender's belief about the state of the system at time \( k \), where \( b_k(0) \) is the belief that the system is controlled by the defender at time \( k \). The belief at the next time step \( k+1 \) is updated as follows:

\[
b_{k+1} = T_{\pi_A}(o_{k+1}, b_k, d_k)
\]

Here, \( T_{\pi_A}(o_{k+1}, b_k, d_k) \) represents the update of belief in step \( k+1 \), based on the prior belief \( b_k \) in step \( k \), the observation \( o_{k+1} \) made at step \( k+1 \), and the action \( d_k \) taken by the defender at step \( k \). Based on the Bayes formula $p(a_i/b)=\frac{p(b/a_i)p(a_i)}{\sum_j p(b/a_j)p(a_j)}$, the belief update can be expressed in terms of transition probabilities and observation distribution as follows. 
\begin{align}
T\left(y,\pi,u\right)=\frac{B_y(u)P(u)\pi}{\sigma(y,\pi,u)}\label{T_map}
\end{align}
where 
\begin{align}
\sigma(y,\pi,u)=\mathbf{1'}B_y(u)P(u)\pi
\end{align}
\subsection{Strategies}

Let \( \pi_{D} \in \Pi_{D} \) represent the defender's strategy, which maps the defender's current belief about the state to a probability distribution over the possible actions. This can be written as:

\[
\pi_D(b^D_t) \rightarrow \Delta(\mathcal{A}_{D})
\]

In our case, there are only two states and two possible actions. Therefore, the strategy can be viewed as a function that takes the defender's current belief (that the attacker controls the system) and returns the probability of taking action 0 (reimqge). Similarly, the attacker’s strategy maps from the current state and the belief of the defender to a probability distribution over the possible actions of the attacker. This is represented as:

\[
\pi_A(S, b^D_t) \rightarrow \Delta(\mathcal{A}_A)
\]
In short, this can be seen as the mapping from the state to the probability of taking action 0.
\subsubsection{Objective functions}
We consider both players to be rational and seek to maximize their infinite-horizon discounted expected rewards. Given the opponent's strategy, their respective objective functions are defined as:

\begin{align}
J_{D}(\pi_D,\pi_A) = \mathbb{E}_{(\pi_D,\pi_A)} \left[ \sum_{t=1}^{\infty} \gamma^{t-1} R_D(s_t,a_t,d_t) \right], \\
J_{A}(\pi_D,\pi_A) = \mathbb{E}_{(\pi_D,\pi_A)}  \left[ \sum_{t=1}^{\infty} \gamma^{t-1} R_A(s_t,a_t,d_t) \right].
\end{align}
where $\gamma$ is the discount factor, and $\mathbb{E}_{(\pi_D,\pi_A)}$ represents the expectation on the random variable $(S_t,A_t)$ generated by the policy pair $(\pi_D,\pi_A)$ 
\subsubsection{Nash Equilibrium}
We assume that both players want to maximize their reward. A policy pair $(\pi^*_D,\pi^*_A)$ is said to be Nash equilibrium if no player can benefit by deviating unilaterally, mathematically defined as follows. 

\begin{align}
    J_{D}(\pi^*_D,\pi^*_A)\geq J_{D}(\pi_D,\pi^*_A), \quad\forall(\pi_D^*,\pi_D)\in \Pi_D\\
J_{A}(\pi^*_D,\pi^*_A)\geq J_{A}(\pi^*_D,\pi_A), \quad\forall(\pi^*_A,\pi_A)\in \Pi_A
\end{align}
In the next section, we analyze the Nash equilibrium 

\section{Analyzing Nash Equilibrium}\label{Problem_formulation}

The Nash equilibrium in the attacker-defender game can be found by identifying strategy pairs where each player's strategy is the best response to the other's. The defender's best response is determined by solving a Partially Observable Markov Decision Process (POMDP), while the attacker's best response is obtained by solving a Markov Decision Process (MDP). Since a POMDP is a continuous-state MDP with state space being the unit simplex, we
can straightforwardly write down the dynamic programming equation for the optimal
policy as for continuous-state MDPs.

\begin{dmath}
    V_{\pi_A}(b)= \Max_{d\in \mathcal{A}_D}\mathbb{E}_{\pi_A}\left [ R_D(b,d) +\sum_{o\in \mathcal{O}} V_{\pi_A}(T\left ( o,b ,d \right ) )\sigma \left ( o,b ,d \right ) \right ]
\end{dmath}
For the attacker, 
\begin{dmath}
    V_{\pi_D}(s,b)= \Max_{d\in \mathcal{A}_A}\mathbb{E}_{\pi_D}\left [ R_A(s,a) +\sum_{s'\in \mathcal{S}} V_{\pi_D}( s'  )P \left (s',s,a  \right ) \right ]
\end{dmath}

In the next section, we analyze the structure of Nash equilibrium policies in the attacker-defender game. Before proceeding, we recall some key definitions needed for proving the theoretical results.
\begin{definition}[First-order stochastic dominance]
   Let $\mu_1, \mu_2 $ denote two pmfs Then $\mu_1$ is said to first-order
stochastically dominate $\mu_2$, $\mu_1\geq_s \mu_2$ if 
\begin{align}
\sum_{i=j}^{X}\mu_1(i)\geq \sum_{i=j}^{X}\mu_2(i)
\end{align}
\end{definition}

\begin{definition}[Monotone likelihood ratio (MLR) ordering]
    Let $b_1,b_2$
denote two belief state vectors. Then $b_1$ dominates $b_2$ with respect to the MLR order,
\begin{align}
    b_1(i)b_2(j)\leq b_2(i)b_1(j)\quad i<j \quad i,j \in \mathcal{X}
\end{align}
\end{definition}

For a state space of dimension $2$
\[b_1 \geq_r b_2  \Leftrightarrow b_1 \geq_s b_2 \Leftrightarrow b_1(2) \geq b_2(2) \] 

\begin{definition}[Totally positive of order 2 (TP2)] A transition or observation kernel denoted matrix $M$ is TP2 if the $(i + 1)^{th}$ row
MLR dominates the $i^{th}$ row: that is, $M_i \geq_r M_j$, for every $i>j$
\end{definition}
\begin{definition}[tail-sum supermodular]
    A transition probabilities  $P_{ij}(u)$ are tail-sum supermodular in $(i,u)$ if $\sum_{j\geq l} P_{ij}(u)$ is supermodular in $(i,u)$ \emph{i.e,}
    \begin{align*}
        \sum_{j=l}^{X}(P_{i,j}(u+1)-P_{i,j}(u)) \text{ is increasing in i,\;} i\in \mathcal{X},u\in \mathcal{U} 
\end{align*}
which can be written in terms of first order stochastic dominance as 

\begin{align*}
     \frac{1}{2}\left ( P_{i+1}(u+1)+ P_{i}(u)\right )\geq_s \frac{1}{2}\left ( P_{i}(u+1)+ P_{i+1}(u)\right )
\end{align*}
\end{definition}
We show that the nash equilibrium policy for the defender is threshold in its belief about the state of system 
\begin{theorem}
 
Given an attacker policy \(\pi_A \in \Pi_D\), the defender's value function \( V(b) \) is decreasing in the belief state \( b \). Moreover, the defender’s optimal policy follows a threshold structure, i.e., it is decreasing in \( b \).
\end{theorem}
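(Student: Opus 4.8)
The plan is to invoke the machinery of monotone comparative statics for POMDPs, specialized to the two-state case, where the MLR order, first-order stochastic dominance, and the scalar comparison $b(1)\ge b'(1)$ all coincide (as the excerpt already records). Writing $b$ for the scalar belief that the attacker controls the system, I would run the argument on the value-iteration operator
\[
(\mathcal{V}V)(b)=\Max_{d\in\mathcal{A}_D}\left[R_D(b,d)+\gamma\sum_{o}V\big(T(o,b,d)\big)\,\sigma(o,b,d)\right],
\]
whose fixed point is $V$. I would prove that $V$ is decreasing in $b$ by induction on the iterates of $\mathcal{V}$, and then derive the threshold structure from a single-crossing (supermodularity) property of the reimage-minus-continue advantage.

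First I would verify the primitive structural conditions for this specific model. The stage reward of continuing is $R_D(b,1)=1-b$ (the defender collects the control reward only while holding state $0$), while reimaging resets the system to state $0$; in either stage-reward convention $R_D(\cdot,d)$ is nonincreasing in $b$. Next I would check that the defender's effective transition kernels, averaged over the fixed attacker policy $\pi_A$, are TP2: under reimage the kernel is the rank-one matrix sending every state to $0$, and under continue state $1$ is absorbing, so each row pair reduces to a single likelihood-ratio inequality of the form $1/0\ge q/(1-q)$, which holds automatically. The absorbing/deterministic structure is exactly what lets TP2 survive the $\pi_A$-averaging. After confirming the observation kernel is TP2 as well, the standard filtering lemma applies: if $b_1\ge_r b_2$ then $T(o,b_1,d)\ge_r T(o,b_2,d)$ for every observation $o$ and action $d$.

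With these facts in hand the inductive step is routine. Assuming the iterate $V$ is decreasing, the stage-reward term is decreasing for each fixed $d$, and the look-ahead term $\gamma\sum_o V(T(o,b,d))\sigma(o,b,d)$ is decreasing because $T(o,\cdot,d)$ is MLR-monotone, $V$ is decreasing, and averaging a decreasing function against the Bayesian prediction-update weights respects stochastic dominance. Since the pointwise maximum over the two actions of decreasing functions is decreasing, $\mathcal{V}V$ is decreasing, and passing to the fixed point yields the first claim. For the threshold, I would study the advantage $\Delta(b)=Q(b,0)-Q(b,1)$ with $Q(b,d)=R_D(b,d)+\gamma\sum_o V(T(o,b,d))\sigma(o,b,d)$: reimaging sends the system to the low-risk state $0$ regardless of $b$, so $Q(\cdot,0)$ is constant, whereas $Q(\cdot,1)$ inherits the decreasing behavior of $V$ through the MLR-monotone filter. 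Hence $\Delta$ is increasing (single-crossing), giving a threshold $b^*$ with optimal action continue ($d=1$) for $b<b^*$ and reimage ($d=0$) for $b>b^*$; read as an element of $\{0,1\}$, the optimal action is therefore decreasing in $b$.

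The main obstacle is the look-ahead monotonicity inside the induction together with the supermodularity of the continuation values, both of which are complicated by the normalizer $\sigma(o,b,d)$: because the belief update $T$ is a ratio, one cannot argue termwise and must instead use that the unnormalized Bayesian prediction preserves the MLR order while the $\sigma$-weighted average of a monotone function respects dominance. I would isolate this as a lemma, namely that $b\mapsto\sum_o V(T(o,b,d))\sigma(o,b,d)$ is decreasing whenever $V$ is decreasing and the transition and observation kernels are TP2, and verify its hypotheses for the present kernels, including the nontrivial point that TP2 is preserved under the attacker's randomization thanks to the absorbing structure of state $1$. Once that lemma is in place, both the monotonicity of $V$ and the threshold property follow without further difficulty.
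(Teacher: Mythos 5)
Your proposal is correct and follows essentially the same route as the paper: verify that the defender's stage reward is monotone (and submodular) and that the transition/observation kernels are TP2 so the Bayesian filter is MLR-monotone, then conclude monotonicity of the value function and a threshold optimal policy. The only difference is that the paper invokes Theorems 11.2.1 and 11.3.1 of Krishnamurthy's monograph as black boxes at exactly the point where you unpack them (the value-iteration induction and the single-crossing of $Q(b,0)-Q(b,1)$, where your observation that $Q(\cdot,0)$ is constant is a clean specialization), so your argument is a self-contained version of the same proof.
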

\begin{proof}
    \begin{enumerate}
        \item First we show that defender's reward function is decreasing and sub-modular
\begin{align}
    \begin{matrix}
R_D\left(0,0\right)=1-C_D & R_D\left(0,1\right)= 1  \\
R_D\left(1,0\right)=1-C_D & R_D\left(1,1\right)=0\\
\end{matrix}
\end{align}
 
\begin{align}
    R_D\left(0,1\right)-R_D\left(0,0\right)\leq R_D\left(1,1\right)- R_D\left(1,0\right)
\end{align}

\begin{align}
    1-(1-C_D)\geq 0- (1-C_D)
\end{align}
\item Now consider the mapping 
T as defined in \eqref{T_map}. After performing the necessary calculations for our case, it is represented by the following matrices:
\begin{dgroup}
\begin{dmath}
T\left(.,b,0\right)= \pi\begin{bmatrix}
1 & 0  \\
1 & 0 \\
\end{bmatrix}
\end{dmath}

\begin{dmath}
T\left(1,b,1\right)=b P(0,1)\\
=b\begin{bmatrix}
e^{-\alpha} & 1-e^{-\alpha}  \\
 0 & 1 \\
\end{bmatrix}
\end{dmath}
\begin{dmath}
T\left(0,b,1\right)=b\left [  p(T_1/R_0) P(0,1)+p(T_0/R_0)P(0,0)\right ]\\
=\begin{bmatrix}
 \frac{(1-\nu)\pi_De^{-\alpha}+(1-\pi_D)}{(1-\nu)\pi_D+(1-\pi_D)} & \frac{(1-\nu)\pi_D(1-e^{-\alpha}) }{(1-\nu)\pi_D+(1-\pi_D)}  \\
0 & 1 \\
\end{bmatrix}
\end{dmath}
\end{dgroup}
From above, we conclude that,
\[T(b,y,u+1)\geq_r T(b,y,u)\quad \forall y \in \mathcal{Y} \; \forall b \in   \]
and if $y'> y$ then $T(b,y',u)\geq_r T(b,y,u)$
    \end{enumerate}
By Theorem 11.2.1 in \cite{krishnamurthy2016partially}, the value function \( V_{\pi}(b) \) is increasing in the belief \( b \). Furthermore, for a two-state Markov chain, as considered in our case, Theorem 11.3.1 in \cite{krishnamurthy2016partially} establishes that the optimal policy \( \pi \) follows a threshold structure.

\end{proof}
\begin{theorem}
Given a fixed defender policy \(\pi_D \in \Pi_D\), the attacker's value function \(V_{\pi_D}(s, b)\) is increasing in the state \(s\). Furthermore, the attacker's optimal policy exhibits a threshold structure: it is increasing in \(s\). Moreover, since the defender’s policy \(\pi_D\) is decreasing in the belief \(b\) (as established in the previous theorem), the attacker's optimal policy is increasing in \(b\).
\end{theorem}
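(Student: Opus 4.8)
The plan is to treat the attacker's best response as a \emph{fully observed} monotone MDP whose state is the pair $(s,b)$ — the true system state together with the defender's belief — and to run the same machinery as in Theorem 1, except that perfect state observation collapses the POMDP structural results to their MDP specializations. Monotonicity in $s$ will come from the reward and transition structure directly; monotonicity in $b$ will be obtained afterwards by transferring the result through the coupling $\pi_D(b)$.

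First I would write the attacker's one-step reward. With action $0=$ probe and $1=$ not probe, the control reward $1$ accrues exactly in state $1$ and the cost $C_A$ is paid exactly under action $0$, so
\begin{align}
R_A(0,0) &= -C_A, & R_A(0,1) &= 0,\\
R_A(1,0) &= 1-C_A, & R_A(1,1) &= 1.
\end{align}
This is increasing in $s$ for each fixed $a$, which is the reward-side hypothesis for value-function monotonicity. Note it is additively separable in $(s,a)$, hence modular; consequently the monotonicity of the optimal action in $s$ cannot come from the reward and must be supplied by the transitions.

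Second, fixing $\pi_D$, I would form the attacker's transition kernel on $s$ by averaging $\mathcal{T}(\cdot,\cdot,d,a)$ over $d\sim\pi_D(b)$. From the transition table, state $1$ is absorbing under ``continue,'' while from state $0$ probing moves the chain up to state $1$ with probability $1-e^{-\alpha}$; I would then check that (i) the kernel is TP2, i.e. the row from $s=1$ MLR-dominates the row from $s=0$, giving stochastic monotonicity in $s$, and (ii) it is tail-sum supermodular in $(s,a)$, since switching from probe to not-probe lowers the probability of reaching state $1$ strictly more in state $0$ than in state $1$ (where it has no effect). With the modular reward and these two transition properties, the MDP specializations of Theorem 11.2.1 and Theorem 11.3.1 in \cite{krishnamurthy2016partially} give that $V_{\pi_D}(s,b)$ is increasing in $s$ and that the optimal policy is an increasing threshold in $s$.

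Finally, for the dependence on $b$, I would view the attacker as facing a one-parameter family of MDPs indexed by $b$, entering only through the mixing weight $\pi_D(b)$ over the defender's action and through the belief update $T(y,\cdot,d)$. Theorem 1 gives that $\pi_D$ is monotone (decreasing) in $b$, and its proof already established that the filter $T$ is MLR-monotone in the observation and in the belief; chaining these two facts shows that the composite kernel on $(s,b)$ is stochastically monotone in $b$ and that the $(a,b)$ cross-interaction is supermodular, so standard monotone comparative statics yield that the attacker's optimal action is increasing in $b$. I expect this last step to be the main obstacle: unlike $s$, the coordinate $b$ is not a physical state with a prescribed kernel but a belief whose own dynamics depend on $b$ and on the randomized defender action $\pi_D(b)$, so the required monotonicity is not a single off-the-shelf theorem but must be assembled from the filter monotonicity and the $\pi_D$-monotonicity of Theorem 1.
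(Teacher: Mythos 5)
Your proposal follows essentially the same route as the paper's proof: the same attacker reward table, the same transition kernel averaged over $d\sim\pi_D(b)$, the same stochastic-monotonicity and tail-sum-supermodularity checks on that kernel, the same appeal to the monotone-MDP structural theorems of \cite{krishnamurthy2016partially} to get monotonicity and a threshold in $s$, and the same final transfer of monotonicity to $b$ through the monotone dependence of $\pi_D$ on $b$. If anything, you are slightly more precise than the paper, which labels the (in fact modular, since the supermodularity inequality holds with equality $C_A \leq C_A$) reward as supermodular and dispatches the $b$-monotonicity step in a single sentence, whereas you correctly flag it as the part that is not an off-the-shelf theorem.
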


\begin{proof}
    \begin{enumerate}
        \item First we show reward function is increasing super modular \begin{align}
    \begin{matrix}
R_A\left(0,0\right)=-C_A & R_A\left(0,1\right)= 0  \\
R_A\left(1,0\right)=1-C_A& R_A\left(1,1\right)=1 \\
\end{matrix}
\end{align}
\begin{align}
    R_A\left(0,1\right)-R_A\left(0,0\right)\leq R_A\left(1,1\right)- R_A\left(1,0\right)
\end{align}
Thus reward function is super-modular and increasing,

 \item Now consider the transition matrix for the attacker, where $u=d=0$ and  $u+1=d=1$ 

\begin{equation}
    P_{l}(u)=\begin{bmatrix}
1-(1-e^{-\alpha})(1-\pi_D)& (1-e^{-\alpha})(1-\pi_D) \\
 \pi_D & 1-\pi_D \\
\end{bmatrix}
\end{equation}
\begin{equation}
    P_{l}(u+1)=\begin{bmatrix}
1 &  0 \\
\pi_D  & 1-\pi_D \\
\end{bmatrix}
\end{equation}
From above matrix 
\begin{equation}
     P_s(u)\leq_s P_{s+1}(u) ,\forall u
\end{equation}

\item we show that transition probabilities $P_{ij}(u)$ are tail-sum supermodular
\begin{equation}
    \frac{1}{2}\left ( P_{s+1}(u+1)+ P_{i}(u)\right )\geq_s \frac{1}{2}\left ( P_{s}(u+1)+ P_{s+1}(u)\right )
\end{equation}
\begin{dmath}
\left [ \pi_D+ 1-(1-e^{-\alpha})(1-\pi_D) , 1-\pi_D+ (1-e^{-\alpha})(1-\pi_D) \right ]\geq_s \left [ 1+\pi_D,0+ 1-\pi_D \right ]
\end{dmath}
    \end{enumerate}

Given that the defender's policy $\pi_D$ is fixed and the three conditions mentioned earlier are satisfied, Theorem 9.3.1 implies that the value function increases with the state $s$. This means there exists an optimal policy that is also increasing in $s$. In particular, if it is optimal to take action $d=0$ (probe) when the state is $s=0$, then it is optimal to take action $d=1$ (not probe) when the state increases to $s=1$.
Moreover, since these properties hold with respect to the defender’s policy $\pi_D$, and if $\pi_D$ increases with the belief $b$ (as shown in the previous theorem), then there also exists an optimal policy that increases with $b$ and has a threshold structure. That is, there exists a threshold value $b_{\text{threshold}}$ such that it is optimal to stop probing when the belief exceeds this value.
\end{proof}
We assume static costs for the defender and attacker, but the threshold structure remains valid with dynamic costs if their net rewards maintain the same structure.
\section{Learning}\label{sec_learning}
In the previous section, we showed that the Nash equilibrium of a partially observable stochastic game follows a special policy structure, with thresholds based on the belief of the state . Building on this, we now aim to develop a reinforcement learning scheme that enables players to reach the Nash equilibrium \cite{policy_gradient_pano}.
To achieve this, we first approximate the threshold policy using a sigmoid function. Let \( b \) represent the current belief about the state and \( \theta \) be a threshold parameter. A threshold policy based on a sigmoid function is given by:  

\[
\pi(a | b, \theta) = \frac{1}{1 + e^{-K (b - \theta)}}
\]
Here, \( K \) is a scaling parameter that controls the steepness of the sigmoid function. A larger \( K \) results in a steeper transition around the threshold \( \theta \).
\begin{figure}
    \centering
    \resizebox{0.6\columnwidth}{0.45\columnwidth}{
   \begin{tikzpicture}
    \begin{axis}[
        axis lines=middle,
        xlabel={$x$}, ylabel={$y$},
        domain=0:1,
        samples=100,
        xmin=0, xmax=1,
        ymin=0, ymax=1,
        xtick={0,0.25,0.5,0.75,1},
        ytick={0,0.5,1},
    ]

\addplot[blue, thick] {1 / (1 + exp(-20*(x-0.45)))};
\addplot[red, thick] {1 / (1 + exp(-50*(x-0.45)))};
    \end{axis}
\end{tikzpicture}}

\caption{Approximation of the step function using a sigmoid function with a threshold of \(0.45\), shown for  \(k = 20\) (blue) and \(k = 50\) (red).}
\end{figure}
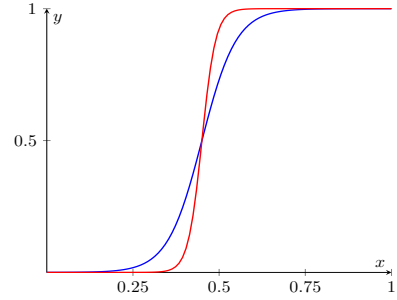 Consider the gradient of the objective function with respect to the parametric threshold. The corresponding policy gradient is given by: \cite{sutton1998reinforcement}\cite{sutton1999policy}
    \[
    \nabla J(\theta) \propto \mathbb{E}_{\pi}\left[\nabla \log \pi(a| b) Q(b, a)\right]
    \]
    Where:
    \begin{itemize}
      \item $\nabla J(\theta)$ is the gradient of the expected return with respect to the policy parameters $\theta$.
      \item $\pi(a| b)$ is the policy.
      \item $Q(b, a)$ is the action-value function.
    \end{itemize}
Next, we introduce our Policy Gradient-Based Fictitious Play for computing Nash equilibrium policies. In this method, the first player (the attacker) starts by fixing the opponent's (defender's) strategy and uses policy gradient techniques to determine its best response. Then, the second player fixes the first player's policy and applies policy gradient learning to find its best response. This iterative process continues until both players converge to a pair of policies that are mutual best responses.

\begin{algorithm}
\caption{Policy Gradient-Based Fictitious Play for Stochastic Game}

\begin{algorithmic}[1]
\State Initialize the threshold policy for the each player Defender and Attacker.
\Repeat~\textbf{}
\ForEach {Player $s\in \left\{ \text{Attacker, Defender} \right\} $}

    \State Consider the policy of the opponent as fixed.
    \State Collect trajectories using policies \( \pi_{\theta_s} \).
        \State Compute the returns \( R_s(\tau) \).
        \State Compute the policy gradients \( \nabla_{\theta_s} J_s(\theta_{s}, \theta_{-s}) \).
        \State Update the policy parameters:
        \[
        \theta_s \leftarrow \theta_s + \alpha_s \nabla_{\theta_s} J_s(\theta_{s}, \theta_{-s})
        \]
        \EndFor
\State Repeat steps 2 and 3 until convergence.
\Until {convergence}
\end{algorithmic}
\end{algorithm}

\section{Numerical results}\label{numerical}
\begin{figure*}[htbp!]
 \centering
\resizebox{0.3\textwidth}{!}
{
\begin{tikzpicture}
\node[anchor=south west] at (-1.3,5.5){\large\textbf{(a)}};
    \begin{axis}[
      xlabel={C_D},
      ylabel={Threshold},
      legend style={at={(0.5,-0.2)}, anchor=north,legend columns=-1},
      title={C_A=0.01}
    ]

   \addplot table [x index=0, y index=1, col sep=comma] {./data_files/50_alpha_20_gamma20_C_A_10.csv}; 
   \addplot table [x index=0, y index=2, col sep=comma] {./data_files/50_alpha_20_gamma20_C_A_10.csv};
    \addlegendentry{Defender}\addlegendentry{Attacker}\addlegendentry{Resource3}
       \end{axis}
  \end{tikzpicture}
 }  
\resizebox{0.3\textwidth}{!}
{

\begin{tikzpicture}
\node[anchor=south west] at (-1.3,5.5){\large\textbf{(b)}};
    \begin{axis}[
      xlabel={C_D},
      ylabel={Threshold},
      title={C_A=0.05},
      legend style={at={(0.5,-0.2)}, anchor=north,legend columns=-1},
    ]

   \addplot table [x index=0, y index=1, col sep=comma] {./data_files/50_alpha_10_gamma20_C_A_50.csv}; 
   \addplot table [x index=0, y index=2, col sep=comma] {./data_files/50_alpha_10_gamma20_C_A_50.csv};
    \addlegendentry{Defender}\addlegendentry{Attacker}\addlegendentry{Resource3}
       \end{axis}
  \end{tikzpicture}
 }  
\resizebox{0.3\textwidth}{!}
{
\begin{tikzpicture}
\node[anchor=south west] at (-1.3,5.5){\large\textbf{(c)}};
    \begin{axis}[
      xlabel={C_D},
      ylabel={Threshold},
      legend style={at={(0.5,-0.2)}, anchor=north,legend columns=-1},
      title={C_A=0.1}
    ]

   \addplot table [x index=0, y index=1, col sep=comma] {./data_files/50_alpha_10_gamma20_C_A_100.csv}; 
   \addplot table [x index=0, y index=2, col sep=comma] {./data_files/50_alpha_10_gamma20_C_A_100.csv};
    \addlegendentry{Defender}\addlegendentry{Attacker}\addlegendentry{Resource3}
       \end{axis}
  \end{tikzpicture}
 }  
\caption{The figure shows the threshold Nash equilibrium for various attacker and defender cost values.}
\end{figure*}
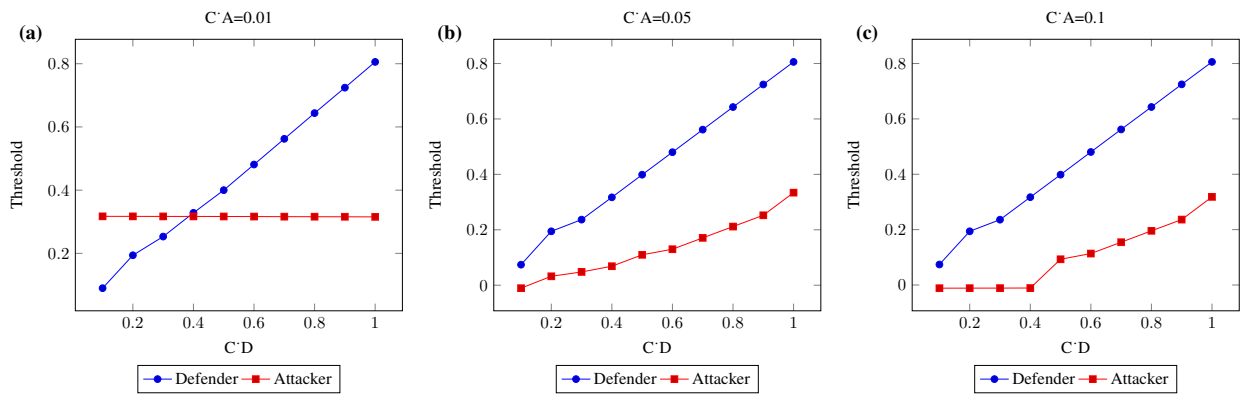
In this section, we validate the proposed learning algorithm through numerical simulations. We assume that the defender's reimaging cost ranges from 0.1 to 1, and we vary it between 0 and 1. Meanwhile, the attacker's probing cost is fixed at 0.01, 
0.05, and 0.1, respectively. We plot the Nash equilibrium threshold for the defender and attacker, for the defender it determines the belief above which the defender chooses to reimage the system, while for attacker it stop probing system.


We observe that, in the Fig.2 (a) attacker cost for each probe is $0.01$, the threshold for the defender increases with increasing cost of reimage, while for attacker it is independent of the cost of reimage and fix. For Fig. 2(b), we assume that each probe costs the attacker $0.05$. We observe that as the defender's reimaging cost increases, the Nash equilibrium thresholds for both players also increase. When the reimaging cost is very low, the defender can reimage the system frequently, reducing the attacker's chances of compromising it. As a result, the attacker's threshold to stop probing is zero, meaning they never initiate probes. However, as the cost of reimaging increases, the attacker has more opportunities to investigate and potentially compromise the system.

\par In Fig. 2(c), we consider a scenario where the attacker's probing cost is 0.1, which is higher than in the two previous cases. Initially, when the defender's reimaging cost is very low, the defender can frequently reimage the system, making the attacker's optimal threshold zero. Even as the reimaging cost increases, the threshold remains zero compared to the previous setting. This is because, although a higher reimaging cost would typically encourage the attacker to probe, the attacker's probing cost is also higher in this setting, preventing probing within a certain range of reimaging costs. However, when the reimaging cost increases further, it eventually becomes favorable for the attacker to probe and compromise the system.

\section{Conclusion}
This research explored the application of adaptive learning techniques in MTD to enhance cybersecurity strategies. By leveraging game theory and POMDP, we developed a dynamic framework where defenders can adjust their strategies against evolving cyber threats.
Our theoretical results demonstrate that the initial problem can be reduced to a threshold problem which is by nature much more scalable and is then a promising approach for an industrial application. Despite promising results, challenges such as computational complexity, deployment constraints, and the need for quality threat intelligence remain. Future research should focus on hybrid learning models, real-time threat detection, and adversarial AI to improve the effectiveness of MTD strategies in real-world cybersecurity.



\bibliography{aaai24}
\bibliographystyle{IEEEtran}
\end{document}